\def\BibTeX{{\rm B\kern-.05em{\sc i\kern-.025em b}\kern-.08em
    T\kern-.1667em\lower.7ex\hbox{E}\kern-.125emX}}
\DeclareMathOperator*{\argmax}{\arg\!\max}
\newcommand{\minitab}{\hspace*{2em}}
\newtheorem{theorem}{Theorem}
\newtheorem{lemma}{Lemma}
\newtheorem{assum}{Assumption}
\begin{document}

\title{Optimal Mechanism Design with Flexible Consumers and Costly Supply 
}

\author{\IEEEauthorblockN{1\textsuperscript{st} Shiva Navabi}
\IEEEauthorblockA{\textit{Electrical Engineering Department} \\
\textit{University of Southern California}\\
Los Angeles, USA \\
navabiso@usc.edu}
\and
\IEEEauthorblockN{2\textsuperscript{nd} Ashutosh Nayyar}
\IEEEauthorblockA{\textit{Electrical Engineering Department} \\
\textit{University of Southern California}\\
Los Angeles, USA \\
ashutosn@usc.edu}
}

\maketitle
\vspace{-10em}
\begin{abstract}
The problem of designing a profit-maximizing, Bayesian incentive compatible and individually rational mechanism with  flexible consumers and costly heterogeneous supply  is considered. In our setup, each consumer is associated with a  \emph{flexibility set} that describes the subset of goods the consumer is equally interested in. Each consumer wants to consume \emph{one good from its flexibility set}. The flexibility set of a consumer  and the utility it gets from consuming a  good from its flexibility set are  its private information. We adopt the flexibility model of \cite{navabi2016optimal} and focus on the case of nested flexibility sets --- each consumer's flexibility set can be one of   $k$ nested sets. Examples of settings with this inherent nested structure are provided. On the supply side, we assume that the seller has an initial stock of free supply but it can purchase more goods for each of the nested sets at fixed exogenous prices. We characterize the allocation and purchase rules for a profit-maximizing, Bayesian incentive compatible and individually rational mechanism as the solution to an integer program. The optimal payment function is pinned down by the optimal allocation  rule in the form of an integral equation. We show that  the nestedness of flexibility sets can be exploited to obtain a simple description of the optimal allocations, purchases and payments in terms of  thresholds that can be computed through a straightforward iterative procedure.  
\end{abstract}



\begin{IEEEkeywords}
Profit maximization, Bayesian incentive compatibility, flexible demand, optimal auction.
\end{IEEEkeywords}

\section{Introduction}\label{sec:intro}

Allocation of limited resources among a set of self-interested consumers whose preferences are their private information, arises in a number of applications including cognitive radio networks and communication networks. The allocation procedure would typically involve eliciting some information from the consumers. The self-interested consumers can be strategic in revealing their private information. Hence, there is a need to anticipate the consumers' strategic behavior in revelation of their private information. The economic theory of mechanism design provides a framework for studying these problems from a game theoretic perspective where the owner of the resources can design the rules of interactions among the consumers in a way that its desired objective emerges at the equilibrium outcome of the induced game. 


In this paper, we consider the problem of designing profit-maximizing mechanisms for allocating multiple heterogeneous goods to flexible consumers. 
We consider the flexibility model proposed by \cite{navabi2016optimal} where for each consumer a  \emph{flexibility set} is defined as the subset of goods the consumer finds equally desirable. Each consumer wants to consume \emph{one good from its flexibility set}. A consumer's flexibility set and the utility it gets from consuming a good from its flexibility set are both its private information. The flexibility sets considered in  \cite{navabi2016optimal} have a nested structure ---  each consumer's flexibility set can be one of the  $k$ sets, $\mathcal{B}_1, \mathcal{B}_2,\ldots, \mathcal{B}_k$, which are nested in the following way:\vspace{-1em}
\begin{equation}\label{nestB}
\mathcal{B}_1 \subset \mathcal{B}_2 \subset \cdots \subset \mathcal{B}_{k}.
\end{equation}

\subsection{Examples of Nested Flexibility}

There are several markets where consumer flexibility resembles the nested pattern in \eqref{nestB}.  An example comes from auction-based  spectrum allocation in cognitive radio networks (\cite{khaledi2013auction}, \cite{zhang2012auction}) where a primary spectrum owner has multiple frequency bands with different bandwidths. These bands  can be allocated to secondary users who need a certain minimum amount of bandwidth.  Suppose the primary owner has frequency bands of widths $w_1, w_2, \cdots, w_{k}$ with  $w_1 < w_2 < \cdots < w_k$. Let $\mathcal{W}_i, i = 1, 2, \cdots, k,$ denote the set of frequency bands of width $w_i$ that are available for allocation to secondary users.  Define $\mathcal{B}_i = \bigcup\limits_{j=k-i+1}^k \mathcal{W}_j, i = 1, 2, \cdots, k,$ as the set of frequency bands of width greater than or equal to $w_{k-i+1}$.  We thus have $\mathcal{B}_{1} \subset \mathcal{B}_{2} \subset \cdots \subset \mathcal{B}_{k}$. A secondary user that needs one frequency band of width at least $w_i$ can be interpreted as having  $\mathcal{B}_{k-i+1}$ as its flexibility set.


Consider next auction-based content delivery in Wireless Information Centric Networks \cite{mangili2016bandwidth} where multiple content providers compete for limited cache storage resources provided by a Wireless Access Point (WAP) in a given region for a certain time period. Suppose the WAP has $k$ cache servers with storage capacities $c_1 < c_2 <\ldots < c_k$. Assume that one cache server can serve at most one content provider at a time. Let $\mathcal{B}_i$ be the set of cache servers with capacity greater than or equal to $c_{k-i+1}$. Clearly the sets $\mathcal{B}_i, i=1,\ldots,k,$ are nested. A content provider who needs a  cache of storage capacity   at least $c_{k-i+1}$ has the flexibility set $\mathcal{B}_i$.



\subsection{Comparison with Prior Literature}\label{sec:prior-lit}

Numerous works have addressed social welfare maximizing or \textit{efficient} auctions, the most well-known of these being the Vickrey-Clarke-Groves (VCG) mechanism \cite{vickrey1961counterspeculation}, \cite{clarke1971multipart}, \cite{groves1973incentives}. Efficient auctions have also been extensively studied in the context of combinatorial auctions (\cite{cramton2006combinatorial}, \cite[Chapter 8]{young2014handbook}, \cite[Chapter 11]{nisan2007algorithmic}). Our focus in this paper, however, is on   revenue-maximizing auctions. 
In his seminal paper \cite{myerson1981optimal}, Myerson derived fundamental results for the single-unit revenue maximizing auction. In sequel, several works studied revenue-maximizing multi-unit auctions with identical goods under various assumptions about the consumers' utility model and private information structure.  The setups in  \cite{harris1981theory} and  \cite{maskin1989optimal}, for instance, include the problem of auctioning multiple identical goods among consumers with unit demand and private valuations. \cite{malakhov2009optimal} considered the auction of multiple identical goods to  consumers with limited capacities for the number of goods they can consume.
Unlike these models where all goods are perceived to be identical by all the consumers, in our model consumers differentiate between goods according to their flexibility sets.

The problem of designing revenue-maximizing auctions has also been investigated in the context of combinatorial auctions \cite[Section 5.2]{de2003combinatorial} where the seller has multiple heterogeneous items to auction and consumers can place bids on various \textit{combinations/bundles} of  goods.
Armstrong \cite{armstrong2000optimal}  and Avery et al. \cite{avery2000bundling} studied  revenue-maximizing auction for the case where the seller wants to sell two non-identical goods to several consumers  that can receive one or both of the goods. Unlike these setups, in our model each consumer wants at most one good and the number of goods can exceed two.

Some recent works (\cite{ledyard2007optimal}, \cite{abhishek2010revenue}) have studied the design of revenue-maximizing auctions when the consumers are single-minded. A single-minded consumer is interested in getting \textit{all} goods from a certain subset of goods. In contrast to these models, in our setup  each consumer wants to get \textit{one} good from its flexibility set.

\cite[Section 5.2]{de2003combinatorial} considered a general setup where the seller has multiple distinct goods and each consumer has a value function that gives its valuation for each subset of goods. For each allocation rule,  \cite{de2003combinatorial} provides a linear program whose solution (if it exists) gives a payment   rule  that satisfies incentive compatibility and individual rationality constraints. 
As mentioned in  \cite[Section 5.2]{de2003combinatorial}, this approach is computationally very demanding as the number of possible allocation rules can be very large and no closed-form solutions are available  in general. 
Our model can be viewed as a special case of the general framework of  \cite{de2003combinatorial}.  A consumer with flexibility set $\mathcal{B}_i$  and valuation $\alpha$ can be viewed as having a value function $v_i(\mathcal{S}) = \alpha$ if $\mathcal{S} \cap \mathcal{B}_i \neq \emptyset \mbox{~and~} |\mathcal{S}|=1$ and $0$ otherwise.
In sections \ref{sec:resOld}-\ref{sec:TaxFuncs}, we show that under the assumption of nested flexibility sets,  the optimal auction can be found in a computationally much simpler fashion.

\subsection{Organization} \label{sec:ORG}
The rest of the paper is organized as follows: we discuss the problem formulation and the mechanism setup in Section \ref{sec:Formul}. In Section \ref{sec:BIC-IR-Mechs}, we characterize incentive compatibility and individual rationality constraints for the mechanism.  In Section \ref{sec:RMM} we characterize the allocation and purchase rules for a profit-maximizing, Bayesian incentive
compatible and individually rational mechanism as the solution to an integer program. Borrowing the results in \cite{navabi2016optimal}, we simplify the optimal allocations, purchases and payments and characterize them in terms of simple thresholds. We conclude the paper in Section \ref{sec:recap} with a summary of our findings and point out potential directions for further research. 


\subsection{Notations}

$\{ 0,1\}^{N \times M}$ denotes the space of $N\times M$ dimensional matrices with entries that are either 0 or 1. $\mathbb{Z}_+$ is the set of non-negative integers. For a set $\mathcal{A}$, $|\mathcal{A}|$ denotes the cardinality of $\mathcal{A}$.  $x^+$ is the positive part of the real number $x$, that is, $x^+ = \max(x,0)$. 
Vector inequalities are component-wise; that is, for two $1\times n$ dimensional vectors $\textbf{u}=(u_1, \cdots, u_n)$ and $\textbf{v}=(v_1, \cdots, v_n)$, $\textbf{u} \le \textbf{v}$ implies that $u_i \le v_i \; , \text{for} \; i=1, \cdots, n$.  $\mathds{1}_{\{a\le b\}}$ denotes 1 if the inequality in the subscript is true and 0 otherwise. $\mathbb{E}$ denotes the expectation operator. For a random variable $\theta$, $\mathbb{E}_{\theta}$ denotes that the expectation is with respect to the probability distribution of $\theta$. 

\section{Problem Formulation}\label{sec:Formul}
We consider a setup where a seller has $M$ goods and $N$ potential consumers. $\mathcal{M} = \{ 1, 2, \cdots, M\}$ denotes the set of goods and $\mathcal{N} = \{1, 2, \cdots, N\}$ denotes the set of potential consumers. Consumer $i$, $i \in \mathcal{N}$, has a flexibility set $\phi_i \subset \mathcal{M}$ which represents the set of goods the consumer is equally interested in. Consumer $i$ can consume at most one good from its flexibility set $\phi_i$. We assume that the flexibility set of each consumer can be one of $k$ nested sets. That is, we have $k$ nested subsets of the set of goods:
\begin{equation}\label{Bsets}
\mathcal{B}_1 \subset \mathcal{B}_2 \subset \cdots \subset \mathcal{B}_k \subseteq \mathcal{M},
\end{equation}  
and $\phi_i \in \{ \mathcal{B}_1, \mathcal{B}_2, \cdots, \mathcal{B}_k\}$ for every $i \in \mathcal{N}$. We also define $\mathcal{B}_0$ as the empty set, i.e., $\mathcal{B}_0 = \emptyset$. Based on their flexibility sets, we can divide the consumers into $k$ classes: $C_l$ is the set of consumers with flexibility set $\mathcal{B}_l$. Clearly, $\mathcal{N} = \bigcup\limits_{i=1}^k \mathcal{C}_i$ and $\mathcal{C}_i \cap \mathcal{C}_j = \emptyset \; , \; \forall i \neq j$. We define \vspace{-.6em}
\begin{equation}  \label{mi}
\begin{split}
 n_l := |\mathcal{C}_l| \; , ~ \;
m_l := |\mathcal{B}_l \: \backslash \: \mathcal{B}_{l-1}| ,  \; l =1, 2, \ldots, k .
\end{split}
\end{equation}
We also define the vectors \textbf{n} and \textbf{m} as \vspace{-.6em}
\begin{equation} \label{SupplyDemandProfs}
\begin{split}
\textbf{n} := (n_1, n_2, \cdots, n_k) \;  , \; \; 
\textbf{m} := (m_1, m_2, \cdots, m_k).
\end{split}
\end{equation}
The vector \textbf{n} is referred to as the \emph{demand profile} and the vector \textbf{m} is referred to as the \emph{supply profile}.

If $\phi_i = \mathcal{B}_j$, we say that consumer $i$'s \textit{flexibility level}, denoted by $b_i$, is $j$. Consumer $i$'s utility for receiving a good from $\phi_i$ is $\theta_i$. We assume that $\theta_i$ and $b_i$ are consumer $i$'s private information and are unknown to other consumers as well as the seller. We assume that $(\theta_i, b_i), i \in \mathcal{N}$, are independent random pairs taking values in the product set $[\theta_i^{max}, \theta_i^{min}] \times \{1, 2, \cdots, k\}$\footnote{$\theta_i^{min}$ is assumed to be non-negative for all $i \in \mathcal{N}$.}, $i \in \mathcal{N}$. The probability distributions $f_i$ of $(\theta_i, b_i), i \in \mathcal{N}$, are assumed to be common knowledge. We define $\theta \coloneqq (\theta_1, \theta_2, \cdots, \theta_N)$ and $b \coloneqq (b_1, b_2, \cdots, b_N)$ as the consumers' valuations profile and flexibility levels profile, respectively. $f(\theta, b)$ is the joint probability distribution of $(\theta, b)$. Let $\Theta_i = [\theta_i^{min}, \theta_i^{max}]$ and $\Theta \coloneqq \prod\limits_{i=1}^N \Theta_i$. The pair $(\theta_i, b_i)$ is referred to as consumer $i$'s type. 

While the initial set of goods ($\mathcal{M}$) is available to the seller at zero cost, it can purchase more goods for addition to the sets $\mathcal{B}_i \setminus \mathcal{B}_{i-1}, i = 1, \cdots,k$. Each additional good for the sets $\mathcal{B}_i \setminus \mathcal{B}_{i-1}, i = 1, \cdots,k$ can be purchased at the price of $p_i, i = 1, \cdots, k$, where $p_1 > p_2 > \cdots > p_k$. 

\subsection{Direct Mechanisms} \label{sec:direct}
We consider direct mechanisms where, for each $i \in \mathcal{N}$, consumer $i$ reports a valuation from the set $\Theta_i$ and a flexibility level from the set $\{1, 2, \cdots, k\}$ to the seller. The consumers can misreport their valuations as well as their flexibility levels. We assume however that consumers cannot over-report their flexibility levels:

\begin{assum}\label{assum:feasb}
For each $i \in \mathcal{N}$, consumer $i$'s reported flexibility level $c_i$ cannot exceed its true flexibility level $b_i$.
\end{assum}
The above assumption can be justified by noting that consumers gain no utility from getting a good outside their true flexibility set and may in fact suffer a significant disutility if allocated a good outside their true flexibility set. To avoid the risk of getting an unusable or damaging good, consumers may reasonably restrict themselves to under-reporting or truthfully reporting their flexibility levels.   

A mechanism consists of an allocation rule $\xi$, a payment rule $t$ and a purchase decision rule $g$. The allocation rule $\xi$ is a function from the type profile space $\Theta \times \{1, 2, \cdots, k\}^N$ to $\{0,1\}^N$ with its $i$th entry $\xi_i$ being equal to $1$ if consumer $i$ receives a good and $0$ otherwise. We assume that the allocation rule $\xi$ does not give a consumer any good that is outside its \textit{reported} flexibility set.  This can be formalized as follows: 
\begin{assum}\label{assum:Nonz}
We assume that for each $i \in \mathcal{N}$, if consumer $i$ reports flexibility level $c_i$, the mechanism either allocates a good from $\mathcal{B}_{c_i}$ to consumer $i$ or it does not allocate any good to consumer $i$. 
\end{assum}
The above assumption simply means that the mechanism respects the consumer's reported flexibility constraint. 

The purchase decision rule $g$ is a function from $\Theta \times \{1, 2, \cdots, k\}^N$ to $\mathbb{Z}_+^{k}$ with its $i$th entry $g_i$ being the number of goods additionally purchased at the price of $p_i$ and added to the set $\mathcal{B}_i \setminus \mathcal{B}_{i-1}$.

 We require that each of the ($M$ available and the additionally purchased) goods be allocated to at most one consumer and that each consumer receives at most one good. The allocation and purchase decision vector pairs $(\xi,g)$ that satisfy these two constraints together, are called feasible decision vector pairs. The feasibility constraints are formally stated in the following lemma.
 
 \begin{lemma} \label{lemma:Adeq}
 For a reported type profile $(r,c) \in \Theta \times \{ 1, 2, \cdots, k \}^N$, the allocation $\xi(r,c)$ and purchase $g(r,c)$ will satisfy the feasibility constraints if and only if
 \begin{equation} \label{feas}
 \sum\limits_{l: c_l \le i} \xi_l(r,c) \le \sum\limits_{j \le i} (m_j + g_j(r,c)) \; , \; i = 1, 2, \cdots, k.
 \end{equation}
 \begin{proof}
 The proof is based on the arguments in \cite[Section V-A]{navabi2016optimal}.
 \end{proof}
 \end{lemma}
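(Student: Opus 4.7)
The plan is to frame feasibility of $(\xi(r,c), g(r,c))$ as a bipartite matching question and then exploit the nested structure to reduce Hall's condition to the single-index inequalities \eqref{feas}. I would build a bipartite graph whose left side is the set of consumers $\{l : \xi_l(r,c)=1\}$ slated to receive a good, and whose right side is the pool of $\sum_{j=1}^{k}(m_j+g_j(r,c))$ items (original and purchased), with an edge from consumer $l$ to item $\sigma$ whenever $\sigma$ lies in $\mathcal{B}_{c_l}$ or was purchased for some $\mathcal{B}_j\setminus \mathcal{B}_{j-1}$ with $j\le c_l$. Feasibility of $(\xi,g)$ at $(r,c)$ is then equivalent to the existence of a matching saturating the consumer side of this graph.

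For necessity, fix $i\in\{1,\dots,k\}$ and consider the allocated consumers with $c_l\le i$. By Assumption \ref{assum:Nonz} each such consumer receives an item from $\mathcal{B}_{c_l}\subseteq \mathcal{B}_i$; since each such item (original or purchased at some level $\le i$) can be assigned to at most one consumer and there are exactly $\sum_{j\le i}(m_j+g_j(r,c))$ of them, inequality \eqref{feas} follows at once.

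For sufficiency I would invoke Hall's marriage theorem on the bipartite graph above. The nested structure provides the key simplification: for any subset $S$ of allocated consumers, set $l^{\ast}:=\max_{l\in S}c_l$; then every consumer in $S$ has flexibility set $\mathcal{B}_{c_l}\subseteq\mathcal{B}_{l^{\ast}}$, so the neighborhood of $S$ coincides with the neighborhood of the canonical set $S_{l^{\ast}}:=\{l : c_l\le l^{\ast},\ \xi_l(r,c)=1\}$, while $|S|\le|S_{l^{\ast}}|$. Therefore Hall's condition need only be checked on the canonical sets $S_1,\dots,S_k$, which is precisely the content of \eqref{feas}.

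The main obstacle is the sufficiency direction, and within it the monotonicity reduction to canonical index sets; I expect this to be straightforward once phrased as above. As an alternative route that sidesteps Hall entirely, one can give a direct greedy construction that scans the allocated consumers in increasing order of $c_l$ and, at each step, assigns an as-yet-unused item drawn from the lowest-level stratum $\mathcal{B}_j\setminus \mathcal{B}_{j-1}$ (with $j\le c_l$) that still has inventory; the inequalities \eqref{feas} guarantee inductively that such an item is always available, producing an explicit feasible matching.
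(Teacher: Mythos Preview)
Your argument is correct: the necessity direction is an immediate pigeonhole count, and your Hall/greedy argument for sufficiency is the standard way to handle such nested (laminar) demand structures, with the key reduction being that the maximum reported level $l^{\ast}$ in any subset $S$ already determines $N(S)$. The paper itself does not give a self-contained proof but simply defers to \cite[Section~V-A]{navabi2016optimal}; your write-up is precisely the kind of argument one expects there, so the approaches should be regarded as essentially the same.
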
 
 \vspace{-.7em}
 We define $\mathcal{S}(c) \subset \{0,1\}^N \times \mathbb{Z}_+^k$ as the set of all $\xi(r,c)$ and $g(r,c)$ that satisfy \eqref{feas}. That is,
\begin{equation}\label{Sset}
\begin{split}
\mathcal{S}(c) \coloneqq \Big\{&\Big(\bold{x} \in \{0,1\}^N, \bold{y} \in \mathbb{Z}_+^{k}\Big) : \\
&\sum\limits_{l : c_l \le i} x_l \le \sum\limits_{j \le i} (m_j + y_j) \; , \; i = 1, 2, \cdots , k \Big\}.
\end{split}
\end{equation}

The payment rule is a mapping from $\Theta \times \{1, 2, \cdots, k\}^N$ to $\mathbb{R}^N$ with the $i$th component $t_i$ being the payment charged to consumer $i$.

Consider a mechanism $(\xi, g, t)$ and suppose consumers report valuations $r \coloneqq (r_1, r_2, \cdots, r_N)$ and flexibility levels $c \coloneqq (c_1, c_2, \cdots, c_N)$\footnote{consumers may not report their valuations and/or flexibility levels truthfully, so $r_i$ and $c_i$ may be different from $\theta_i$ and $b_i$, respectively.}. The mechanism then results in an allocation vector $\xi(r, c)$, purchase decision vector $g(r, c)$ and payments $t(r, c)$. Consumer $i$'s utility function can then be written in terms of its true valuation $\theta_i$, true flexibility level $b_i$, the reported valuations $r$ and the reported flexibility levels $c$ as \vspace{-1em}
\begin{align}
u_i(\theta_i, r, b_i, c) = \theta_i \xi_i(r, c) - t_i(r,c).
\end{align} 
The seller's objective is to find a mechanism that maximizes its expected profit while satisfying \textit{Bayesian Incentive Compatibility} and \textit{Individual Rationality} constraints. We describe these constraints below.

In a Bayesian incentive compatible (BIC) mechanism, truthful reporting of private information (valuations and flexibility levels in our setup) constitutes an equilibrium of the Bayesian game induced by the mechanism. In other words, each consumer would prefer to report its true valuation and flexibility level provided that all other consumers have adopted truth-telling strategy. Bayesian incentive compatibility can be
described by the following constraint:
\begin{equation}
\begin{split}
&\mathbb{E}_{\theta_{-i}, b_{-i}}\Big[ \theta_i \xi_i(\theta,b) - t_i(\theta,b) \Big] \: \ge  \\
&\mathbb{E}_{\theta_{-i}, b_{-i}}\Big[ \theta_i \xi_i(r_i, \theta_{-i}, c_i, b_{-i}) - t_i(r_i, \theta_{-i}, c_i, b_{-i}) \Big], \\
&\forall \theta_i , r_i \in \Theta_i  \; , c_i \le b_i, \; c_i , b_i \in \{1, 2, \cdots, k\} \; , \; \forall i \in \mathcal{N}. 
\end{split}
\raisetag{3.4\baselineskip}
\label{BIC}
\end{equation}
\eqref{BIC} states that if consumer $i$ with type $(\theta_i, b_i)$ reports some other type $(r_i, c_i)$, its expected utility will be \emph{no better than} the expected utility it gets if it reports its type truthfully. Note that Assumption \ref{assum:feasb} implies that the BIC constraint in \eqref{BIC} need not consider the case of $c_i > b_i$.


Individual Rationality (IR) constraint implies that the consumer's expected utility at the truthful reporting equilibrium is non-negative. This can be expressed as:
\begin{equation} \label{IR}
\begin{split}
&\mathbb{E}_{\theta_{-i}, b_{-i}}\Big[ \theta_i \xi_i(\theta,b) - t_i(\theta,b) \Big] \; \ge \; 0  \; \; , \\
&\forall \theta_i \in \Theta_i \; , b_i \in \{1, 2, \cdots, k\} \; , \; \forall i \in \mathcal{N}.
\end{split}
\end{equation}

The expected profit under a BIC and IR mechanism is $\mathbb{E}_{\theta, b}\Big\{\sum\limits_{i=1}^N t_i(\theta, b) - \sum\limits_{j=1}^k p_j g_j(\theta, b) \Big\}$ when all consumers adopt the truthful strategy.

The mechanism design problem  can now be formulated as
\begin{equation}
\begin{split}
\max\limits_{(\xi,g,t)} \; \; \; &\mathbb{E}_{\theta, b}\Big\{\sum\limits_{i=1}^N t_i(\theta, b) - \sum\limits_{j=1}^k p_j g_j(\theta, b)\Big\} \;  , \\
\hspace{-.5em}\text{subject to} \; \; &(\xi(\theta, b), g(\theta, b)) \in \mathcal{S}(b), \forall (\theta, b) \in \Theta \times \{1, 2, \cdots, k\}^N, \\
&\text{\eqref{BIC}, \eqref{IR}}.
\end{split}
\raisetag{1\baselineskip}
\end{equation}

\section{Characterization of BIC and IR Mechanisms}  \label{sec:BIC-IR-Mechs}
Suppose all consumers other than $i$ report their valuations and flexibility levels truthfully. 
We can then define consumer $i$'s expected allocation and payment under the mechanism $(\xi,g,t)$ when it reports $r_i \in \Theta_i\; , \; c_i \in \{1, 2, \cdots, k\}$ as:
\begin{equation} \label{Qi}
\Xi_i(r_i, c_i) \coloneqq \mathbb{E}_{\theta_{-i}, b_{-i}}\Big[ \xi_i(r_i, \theta_{-i}, c_i, b_{-i}) \Big], 
\end{equation}
\vspace{-7pt}
\begin{equation} \label{Ti}
T_i(r_i, c_i) \coloneqq \mathbb{E}_{\theta_{-i}, b_{-i}}\Big[ t_i(r_i, \theta_{-i}, c_i, b_{-i}) \Big].
\end{equation}
We can now rewrite equations \eqref{BIC} and \eqref{IR} in terms of the interim quantities defined in \eqref{Qi}-\eqref{Ti}. 
The BIC constraint for misreports of valuations and flexibility levels becomes:
\begin{equation} \label{BIC_i_QT}
\begin{split}
&\theta_i \Xi_i(\theta_i, b_i) - T_i(\theta_i, b_i) \ge \theta_i \Xi_i(r_i, c_i) - T_i(r_i, c_i) \;  , \\
&\forall \theta_i , r_i \in \Theta_i \; , \; c_i \le b_i \; , \; c_i , b_i \in \{1, 2, \cdots, k\} \; , \; \forall i \in \mathcal{N}. 
\end{split}
\raisetag{2\baselineskip}
\end{equation}
The IR constraint is rewritten as:
\begin{equation}\label{IR_i_QT}
\begin{split}
&\theta_i \Xi_i(\theta_i, b_i) - T_i(\theta_i, b_i) \ge 0 \; , \\
&\forall \theta_i \in \Theta_i \; , b_i \in \{1, 2, \cdots, k\} \; , \; \forall i \in \mathcal{N}.
\end{split}
\end{equation}
The  BIC constraint in \eqref{BIC_i_QT} captures all possible ways that a consumer may misreport its private information. It includes the following two special sub-classes of constraints:
\begin{enumerate}
\item BIC constraint for misreporting  only valuation:
\begin{equation} \label{BICTheta_i_QT}
\begin{split}
&\theta_i \Xi_i(\theta_i, b_i) - T_i(\theta_i, b_i) \ge \theta_i \Xi_i(r_i, b_i) - T_i(r_i, b_i) \;  , \\
&\forall \theta_i , r_i \in \Theta_i \; , \; \forall b_i \in \{1, 2, \cdots, k\} \; , \; \forall i \in \mathcal{N}. 
\end{split}
\raisetag{1\baselineskip}
\end{equation} 
\item BIC constraint for misreporting only  flexibility level:
\begin{equation} \label{BICb_i_QT}
\begin{split}
&\hspace{-1em}\theta_i \Xi_i(\theta_i, b_i) - T_i(\theta_i, b_i) \ge \theta_i \Xi_i(\theta_i, c_i) - T_i(\theta_i, c_i) \;  , \\
&\hspace{-1em}\forall \theta_i \in \Theta_i \; , \; c_i \le b_i \; , \; c_i , b_i \in \{1, 2, \cdots, k\} \; , \; \forall i \in \mathcal{N}. 
\end{split}
\raisetag{1\baselineskip}
\end{equation}
\end{enumerate}
The following result relates the above constraints for ``one-dimensional'' misreports to the general BIC constraint in \eqref{BIC_i_QT}.
\begin{lemma}\label{lemma:BIC2D} The BIC constraint for misreporting both valuation and flexibility level  implies and is implied by the BIC constraints for misreporting only valuation and misreporting only flexibility level. That is, \eqref{BIC_i_QT} holds if and only if \eqref{BICTheta_i_QT} and \eqref{BICb_i_QT} hold.
\begin{proof} 
The proof is similar to  the proof for Lemma 1 in \cite{navabi2016optimal}. 
\end{proof}
\end{lemma}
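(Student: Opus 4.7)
The plan is to prove the two implications separately, with the reverse direction being the substantive one. The forward direction (\eqref{BIC_i_QT} $\Rightarrow$ \eqref{BICTheta_i_QT} and \eqref{BICb_i_QT}) is immediate: the general BIC constraint in \eqref{BIC_i_QT} quantifies over all pairs $(r_i, c_i)$ with $c_i \le b_i$, so specializing to $c_i = b_i$ gives \eqref{BICTheta_i_QT}, and specializing to $r_i = \theta_i$ gives \eqref{BICb_i_QT}. No additional work is needed here.

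For the reverse direction, I would use a chaining argument that routes the deviation $(\theta_i, b_i) \to (r_i, c_i)$ through the intermediate report $(\theta_i, c_i)$. Assuming \eqref{BICTheta_i_QT} and \eqref{BICb_i_QT} hold, apply \eqref{BICb_i_QT} at type $(\theta_i, b_i)$ with flexibility misreport $c_i \le b_i$ to obtain
\[
\theta_i \Xi_i(\theta_i, b_i) - T_i(\theta_i, b_i) \ge \theta_i \Xi_i(\theta_i, c_i) - T_i(\theta_i, c_i).
\]
Then apply \eqref{BICTheta_i_QT} with the ``true'' flexibility level in that inequality set to $c_i$ (legal because \eqref{BICTheta_i_QT} is quantified over all $b_i \in \{1,\ldots,k\}$) and with valuation misreport $r_i$:
\[
\theta_i \Xi_i(\theta_i, c_i) - T_i(\theta_i, c_i) \ge \theta_i \Xi_i(r_i, c_i) - T_i(r_i, c_i).
\]
Combining these two inequalities yields exactly \eqref{BIC_i_QT}.

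The only mildly delicate point, and the one I would flag carefully, is the restriction $c_i \le b_i$ imposed by Assumption \ref{assum:feasb}. The first step of the chain uses \eqref{BICb_i_QT}, which itself carries that same restriction, so it is applicable. The second step uses \eqref{BICTheta_i_QT}, which places no constraint between the valuation coordinates and is quantified over every flexibility level, so instantiating it at $c_i$ raises no feasibility concern. Because both one-dimensional deviations involved are of the type permitted by Assumption \ref{assum:feasb}, the composed two-dimensional deviation $(\theta_i, b_i) \to (r_i, c_i)$ it certifies is also admissible, which is precisely the content of \eqref{BIC_i_QT}. I do not expect any real obstacle; the argument is essentially the same ``telescoping'' used in Lemma 1 of \cite{navabi2016optimal}, transposed to the interim quantities $\Xi_i, T_i$ defined in \eqref{Qi}--\eqref{Ti}.
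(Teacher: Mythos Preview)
Your proposal is correct and matches the standard chaining argument that the paper defers to in \cite{navabi2016optimal}: the forward direction is immediate by specialization, and the reverse direction routes the joint deviation $(\theta_i,b_i)\to(r_i,c_i)$ through the intermediate report $(\theta_i,c_i)$, first invoking \eqref{BICb_i_QT} and then \eqref{BICTheta_i_QT} instantiated at flexibility level $c_i$. Your handling of the $c_i\le b_i$ restriction from Assumption~\ref{assum:feasb} is also correct, so there is nothing to add.
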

Lemma \ref{lemma:BIC2D} allows us to replace the general  BIC constraint for two-dimensional misreports by the simpler one-dimensional BIC constraints given in \eqref{BICTheta_i_QT} and \eqref{BICb_i_QT}.   The mechanism design problem now becomes: \vspace{-.8em}
\begin{equation*} 
\begin{split}
\max\limits_{(\xi,g,t)} \; \; \; &\mathbb{E}_{\theta, b}\Big\{\sum\limits_{i=1}^N t_i(\theta, b) - \sum\limits_{j=1}^k p_j g_j(\theta, b)\Big\} \;  , \\
\text{subject to} \; \; &(\xi(\theta, b), g(\theta, b)) \in \mathcal{S}(b), \; \; \\
&\forall (\theta, b) \in \Theta \times \{1, 2, \cdots, k\}^N, 
\text{\eqref{IR_i_QT}, \eqref{BICTheta_i_QT}, \eqref{BICb_i_QT}}.
\end{split}
\end{equation*}

We will now derive alternative characterizations of the constraints \eqref{IR_i_QT}, \eqref{BICTheta_i_QT}, \eqref{BICb_i_QT} that will be helpful for finding the optimal mechanism.
 
\begin{lemma}\label{thm:one}
 A mechanism $(\xi,g,t)$ satisfies the BIC constraint for misreporting only valuation (as given in \eqref{BICTheta_i_QT}) if and only if for all $i \in \mathcal{N}$, $\Xi_i(r_i, b_i)$ is non-decreasing in $r_i$ and  \vspace{-5pt}
\begin{align}
\hspace{-1em}T_i(r_i, b_i) = K_i(b_i) + r_i \Xi_i(r_i, b_i) - \int\limits_{\theta_i^{\text{min}}}^{r_i} \Xi_i(s, b_i) \: ds, \; \forall b_i. 
\label{Thrm1}
\end{align}
\end{lemma}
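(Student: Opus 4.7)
The plan is to prove this lemma via the standard Myerson-style characterization, fixing $b_i$ and treating the constraint in \eqref{BICTheta_i_QT} as a one-dimensional BIC condition in $r_i$. Throughout the proof I hold $i \in \mathcal{N}$ and $b_i \in \{1,\dots,k\}$ fixed, and write $U_i(\theta_i,b_i) := \theta_i \Xi_i(\theta_i,b_i) - T_i(\theta_i,b_i)$ for the interim utility under truthful reporting.

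For the necessity direction, I would first derive monotonicity by writing \eqref{BICTheta_i_QT} twice with the roles of $\theta_i$ and $r_i$ swapped:
\begin{align*}
\theta_i \Xi_i(\theta_i,b_i) - T_i(\theta_i,b_i) &\ge \theta_i \Xi_i(r_i,b_i) - T_i(r_i,b_i),\\
r_i \Xi_i(r_i,b_i) - T_i(r_i,b_i) &\ge r_i \Xi_i(\theta_i,b_i) - T_i(\theta_i,b_i).
\end{align*}
Adding these yields $(\theta_i - r_i)(\Xi_i(\theta_i,b_i) - \Xi_i(r_i,b_i)) \ge 0$, so $\Xi_i(\cdot,b_i)$ is non-decreasing. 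Next, observe that $U_i(\theta_i,b_i) = \sup_{r_i \in \Theta_i}\bigl[\theta_i \Xi_i(r_i,b_i) - T_i(r_i,b_i)\bigr]$ is a supremum of affine functions of $\theta_i$, hence convex and therefore absolutely continuous on $\Theta_i$. A standard envelope argument (or direct sandwiching using both BIC inequalities above) gives $\partial U_i / \partial \theta_i = \Xi_i(\theta_i, b_i)$ almost everywhere, so
\[
U_i(r_i,b_i) = U_i(\theta_i^{\min},b_i) + \int_{\theta_i^{\min}}^{r_i} \Xi_i(s,b_i)\, ds.
\]
Setting $K_i(b_i) := -U_i(\theta_i^{\min},b_i)$ and rearranging the definition of $U_i$ produces \eqref{Thrm1}.

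For the sufficiency direction, I would substitute the payment formula \eqref{Thrm1} into the expression $\bigl[\theta_i \Xi_i(\theta_i,b_i) - T_i(\theta_i,b_i)\bigr] - \bigl[\theta_i \Xi_i(r_i,b_i) - T_i(r_i,b_i)\bigr]$ and simplify. The $K_i(b_i)$ terms cancel, and after collecting terms the difference reduces to
\[
\int_{r_i}^{\theta_i} \bigl(\Xi_i(s,b_i) - \Xi_i(r_i,b_i)\bigr)\, ds.
\]
The assumed monotonicity of $\Xi_i(\cdot,b_i)$ makes the integrand non-negative when $s \ge r_i$ and non-positive when $s \le r_i$; in either case the oriented integral is non-negative, establishing \eqref{BICTheta_i_QT}.

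I do not anticipate a genuine obstacle here, since this is the classical Myerson payment characterization applied coordinate-wise in $\theta_i$ with $b_i$ held fixed. The only technical point worth being careful about is justifying the integral representation of $U_i$: convexity of $U_i(\cdot,b_i)$ as a sup of affine functions gives absolute continuity and a.e.\ differentiability, which is enough to apply the fundamental theorem of calculus and obtain \eqref{Thrm1} as an equality of functions on $\Theta_i$.
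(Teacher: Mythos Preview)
Your proposal is correct and is precisely the standard Myerson-style envelope argument the paper invokes: the paper's own proof simply defers to \cite{borgers2015introduction} and \cite{navabi2016optimal}, which carry out exactly the monotonicity-plus-envelope necessity direction and the integral-substitution sufficiency direction you outline. There is nothing to add.
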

\begin{proof}
The proof is similar to the arguments in chapters 2-3 of \cite{borgers2015introduction} as well as the proof for Lemma 2 in \cite{navabi2016optimal} for characterizing BIC mechanisms. 
\end{proof}

\begin{lemma}\label{lemma:ir}
Suppose the mechanism $(\xi,g,t)$ satisfies the BIC constraint for misreporting only valuation (as given in \eqref{BICTheta_i_QT}). Then, it satisfies the IR constraint \eqref{IR_i_QT} if and only if 
\begin{equation} \label{Ki}
\theta_i^{\text{min}} \; \Xi_i(\theta_i^{\text{min}}, b_i) - T_i(\theta_i^{\text{min}}, b_i)   \ge 0.
\end{equation}
\end{lemma}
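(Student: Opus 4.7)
The plan is to reduce the IR condition at an arbitrary $\theta_i$ to the IR condition at $\theta_i^{\min}$ by exploiting the explicit payment formula supplied by Lemma \ref{thm:one}. The ``only if'' direction is immediate: if the IR inequality \eqref{IR_i_QT} holds for every $\theta_i \in \Theta_i$, then specialising it at $\theta_i = \theta_i^{\min}$ yields \eqref{Ki}.

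For the ``if'' direction, I would first substitute the payment characterisation
\begin{equation*}
T_i(\theta_i, b_i) = K_i(b_i) + \theta_i \Xi_i(\theta_i, b_i) - \int_{\theta_i^{\min}}^{\theta_i} \Xi_i(s, b_i)\, ds
\end{equation*}
from \eqref{Thrm1} into the expression for the interim utility. This gives the clean representation
\begin{equation*}
\theta_i \Xi_i(\theta_i, b_i) - T_i(\theta_i, b_i) \;=\; -K_i(b_i) + \int_{\theta_i^{\min}}^{\theta_i} \Xi_i(s, b_i)\, ds.
\end{equation*}
Evaluating at $\theta_i = \theta_i^{\min}$ collapses the integral and shows that the condition \eqref{Ki} is equivalent to $-K_i(b_i) \ge 0$.

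Next I would argue that the integral term is nonnegative for every $\theta_i \ge \theta_i^{\min}$. Since $\Xi_i(s, b_i) = \mathbb{E}_{\theta_{-i}, b_{-i}}[\xi_i(s, \theta_{-i}, b_i, b_{-i})]$ is an expected value of a $\{0,1\}$-valued allocation, we have $\Xi_i(s, b_i) \ge 0$ pointwise, so $\int_{\theta_i^{\min}}^{\theta_i} \Xi_i(s, b_i)\, ds \ge 0$. Combining these observations,
\begin{equation*}
\theta_i \Xi_i(\theta_i, b_i) - T_i(\theta_i, b_i) \;\ge\; -K_i(b_i) \;\ge\; 0,
\end{equation*}
which is precisely \eqref{IR_i_QT}. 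The argument is uniform in $b_i \in \{1, \dots, k\}$ and $i \in \mathcal{N}$.

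There is essentially no technical obstacle here; the only subtlety is recognising that Lemma \ref{thm:one} both supplies the integral form of the utility and implicitly guarantees its monotonicity in $\theta_i$, so that IR at the lowest valuation suffices. This is the familiar ``binding IR at the bottom type'' phenomenon from standard single-dimensional mechanism design, adapted here to a fixed flexibility level $b_i$.
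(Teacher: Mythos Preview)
Your proof is correct and follows essentially the same approach as the paper: both use the payment formula from Lemma~\ref{thm:one} to identify $K_i(b_i)$ with $T_i(\theta_i^{\min},b_i)-\theta_i^{\min}\Xi_i(\theta_i^{\min},b_i)$, deduce $K_i(b_i)\le 0$ from \eqref{Ki}, and then (using nonnegativity of $\Xi_i$ in the integral representation of the interim utility) conclude IR at every $\theta_i$. Your write-up simply makes explicit the final step that the paper leaves implicit.
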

\begin{proof}
Clearly \eqref{IR_i_QT} implies \eqref{Ki}. The converse follows from Lemma \ref{thm:one} by noting that \[K_i(b_i) = T_i(\theta_i^{min}, b_i) - \theta_i^{min} \Xi_i(\theta_i^{min}, b_i),\]
and that the right hand side above is non-positive due to \eqref{Ki}.
\end{proof}

Using the  above two lemmas, we  derive a sufficient condition for the mechanism to satisfy the BIC constraint for misreporting only flexibility level.
\begin{lemma}\label{lem:BICSuffb} Suppose the mechanism $(\xi,g,t)$ is individually rational and satisfies the BIC constraint for misreporting only valuation (as given in \eqref{BICTheta_i_QT}). Then the mechanism $(\xi,g,t)$ satisfies the BIC constraint for misreporting only flexibility level if the following are true:
\begin{enumerate}[(i)]
\item $\Xi_i(\theta_i,c_i)$ is non-decreasing in $c_i \; , \; \forall \theta_i \in \Theta_i, \forall i \in \mathcal{N}$, and
\item $T_i(\theta_i^{\text{min}}, c_i) = 0 \; , \; \forall c_i \in \{ 1, 2, \cdots, k \}\; , \; \forall i \in \mathcal{N}$. 
\end{enumerate}
\begin{proof} 
The proof is similar to the proof for Lemma 4 in \cite{navabi2016optimal}. 
\end{proof}
\end{lemma}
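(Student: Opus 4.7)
The plan is to reduce the flexibility-misreport BIC inequality \eqref{BICb_i_QT} to a pointwise comparison of the interim allocation rule at different flexibility levels, by invoking the envelope characterization of payments from Lemma \ref{thm:one}. Concretely, fixing the flexibility report at an arbitrary $c_i$, Lemma \ref{thm:one} (applied with valuation as the free variable) yields
\[
T_i(\theta_i, c_i) = K_i(c_i) + \theta_i \, \Xi_i(\theta_i, c_i) - \int_{\theta_i^{\min}}^{\theta_i} \Xi_i(s, c_i)\, ds,
\]
so that evaluating at $\theta_i = \theta_i^{\min}$ gives $K_i(c_i) = T_i(\theta_i^{\min}, c_i) - \theta_i^{\min}\, \Xi_i(\theta_i^{\min}, c_i)$. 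Hypothesis (ii) forces $T_i(\theta_i^{\min}, c_i) = 0$ for \emph{every} $c_i$, hence $K_i(c_i) = -\theta_i^{\min}\, \Xi_i(\theta_i^{\min}, c_i)$.

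Substituting this back into the payoff that a consumer of true type $(\theta_i, b_i)$ obtains when reporting $(\theta_i, c_i)$ gives
\[
\theta_i\, \Xi_i(\theta_i, c_i) - T_i(\theta_i, c_i) = \theta_i^{\min}\, \Xi_i(\theta_i^{\min}, c_i) + \int_{\theta_i^{\min}}^{\theta_i} \Xi_i(s, c_i)\, ds,
\]
with the analogous identity at $b_i$ giving the truthful payoff. Subtracting, the desired inequality \eqref{BICb_i_QT} reduces to
\[
\theta_i^{\min}\bigl[\Xi_i(\theta_i^{\min}, b_i) - \Xi_i(\theta_i^{\min}, c_i)\bigr] + \int_{\theta_i^{\min}}^{\theta_i}\!\!\bigl[\Xi_i(s, b_i) - \Xi_i(s, c_i)\bigr]\, ds \ge 0.
\]

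By hypothesis (i), $\Xi_i(\cdot, \cdot)$ is non-decreasing in its flexibility argument, so for $c_i \le b_i$ each bracketed difference above is pointwise non-negative. Combined with $\theta_i^{\min} \ge 0$ (the non-negativity assumption stated in the footnote of Section \ref{sec:Formul}), both terms on the left-hand side are non-negative and the inequality follows. There is no real technical obstacle here once Lemma \ref{thm:one} is in hand; the subtle point worth flagging in the write-up is that hypothesis (ii) must be applied at every flexibility report $c_i$, not only at the true level $b_i$, because otherwise the additive constants $K_i(b_i)$ and $K_i(c_i)$ would fail to cancel in a monotone manner and the reduction to the pointwise comparison above would break down.
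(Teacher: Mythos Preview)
Your argument is correct and is precisely the standard envelope-based reduction that the paper defers to (the cited Lemma~4 in \cite{navabi2016optimal}): use Lemma~\ref{thm:one} to express the interim payoff at each flexibility report as $-K_i(\cdot)+\int \Xi_i$, kill the constants via hypothesis~(ii), and conclude from the pointwise monotonicity in hypothesis~(i). There is nothing to add.
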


\section{profit Maximizing Mechanism}\label{sec:RMM}
We can now use the results of Section \ref{sec:BIC-IR-Mechs} to simplify the objective of the mechanism design problem. We define  
\begin{equation}
w_i(\theta_i, b_i) := \Big( \theta_i - \frac{1 - F_{i}(\theta_i | b_i)}{f_{i}(\theta_i | b_i)}\Big),
\end{equation}
 where  $f_{i}(\theta_i | b_i)$ is the conditional probability density function of consumer $i$'s valuation conditioned on its flexibility level $b_i$ and $F_{i}(\theta_i | b_i)$ is the corresponding cumulative distribution function. $w_i(\theta_i, b_i)$  is referred to as consumer $i$'s \textit{virtual type or virtual valuation} in economics terminology \cite{borgers2015introduction}. 
 
 For a mechanism that is individually rational and Bayesian incentive compatible, we can use the result in Lemma \ref{thm:one} to plug in the expression for  $T_i(\theta_i, b_i)$. After some simplifications we obtain  
 \begin{align} \label{eq:ExpRev}
&\mathbb{E}_{\theta, b}\Big\{\sum\limits_{i=1}^N t_i(\theta, b) - \sum\limits_{j=1}^k p_j g_j(\theta, b)\Big\}  \\
&= \mathbb{E}_{b_i}\Big[ K_i(b_i) \Big]   \notag\\
&+ \sum\limits_{b}\int_{\theta}\Big[ \xi_i(\theta, b) w_i(\theta_i, b_i) - \sum\limits_{j=1}^k p_j g_j(\theta, b) \Big] f(\theta, b) d\theta.  \notag
\end{align}
The second term on the right hand side in \ref{eq:ExpRev} is completely determined by the choices of the allocation $\xi(\cdot, \cdot)$ and purchase $g(\cdot, \cdot)$ functions. 
Also, note that  Lemmas \ref{thm:one} and \ref{lemma:ir} imply that $K_i(b_i) = T_i(\theta_i^{\text{min}}, b_i) - \theta^{min}_i\Xi_i(\theta_i^{\text{min}}, b_i)  \le 0$. Therefore, a mechanism $(\xi,g,t)$ that maximizes the second term on the right hand side in \ref{eq:ExpRev} and ensures that $K_i(b_i)=0$ for all $i$ and $b_i$ while satisfying the BIC and IR constraints would provide the largest expected profit. 

 In order to simplify  maximization of the second term in \ref{eq:ExpRev}  we assume that
  the virtual types $\Big( \theta_i - \frac{1 - F_{i}(\theta_i | b_i)}{f_{i}(\theta_i | b_i)}\Big)$ are non-decreasing in $\theta_i$ and $b_i$. Such a condition holds if $\frac{f_{i}(\theta_i | b_i)}{1 - F_{i}(\theta_i | b_i)}$ is non-decreasing in $\theta_i$ and $b_i$. This condition  can be viewed as a generalization of the increasing hazard rate condition \cite[Chapter 2]{borgers2015introduction} and is similar to the condition about monotonicity of virtual valuations  described in \cite{pai2013optimal} for multidimensional private types. We formally state this condition  and our assumptions below.

\textit{Generalized Monotone Hazard Rate Condition:} The type $(\theta_i, b_i)$ is said to be partially ordered above $(\theta_i', b_i')$, and this relation denoted by  $(\theta_i, b_i) \succeq (\theta_i', b_i')$, if $\theta_i \ge \theta_i'$ and $b_i \ge b_i'$. The distribution $f_{i}(\cdot, \cdot)$ then satisfies the generalized monotone hazard rate condition if:  \vspace{-.8em}
\begin{equation}\label{MHRC}
\begin{split}
 \hspace*{-0.3cm}(\theta_i, b_i) \succeq (\theta_i', b_i') \; \; \Longrightarrow \; \; \frac{f_{i}(\theta_i | b_i)}{1 - F_{i}(\theta_i | b_i)} \ge \frac{f_{i'}(\theta_i' | b_i')}{1 - F_{i'}(\theta_i' | b_i')};
\end{split}
\end{equation}
Further,  $b_i > b'_i$ and $\theta_i \geq \theta_i'$ imply
\begin{equation}\label{MHRC_2} \frac{f_{i}(\theta_i | b_i)}{1 - F_{i}(\theta_i | b_i)} > \frac{f_{i'}(\theta_i' | b_i')}{1 - F_{i'}(\theta_i' | b_i')}.
\end{equation}
   
\begin{assum}\label{assum:MHRC}
We assume that the probability density functions $f_i(\cdot, \cdot)$ satisfy the generalized monotone hazard rate condition for all $i \in \mathcal{N}$.
\end{assum}

\begin{assum} \label{assum:Negwmin}
We assume that $w_i(\theta_i^{min}, b_i) < 0 \; , \; \forall b_i \in \{1, 2, \cdots, k\} \; , \; \forall i \in \mathcal{N}$.
\end{assum}

The following theorem characterizes the optimal mechanism under the above assumptions.
\begin{theorem}\label{thm:2}
 Consider the allocation, purchase and tax functions $(\xi^*, g^*, t^*)$ defined below \vspace{-1em}
\begin{align} \label{qi_opt}
(\xi^*(\theta, b), g^*(\theta, b)) \in \argmax\limits_{ (\mathbf{a}, \mathbf{d}) \in \mathcal{S}(b)} \; & \Big\{ \sum\limits_{i=1}^N a_i w_i(\theta_i, b_i) - \sum\limits_{j=1}^k p_j d_j \Big\} \; , \vspace{-1em}
\end{align} 
where $a_i$ and $d_j$ are the $i$th and $j$th entries of the vectors $\mathbf{a}$ and $\mathbf{d}$ respectively; \vspace{-1em}
\begin{equation} \label{ti_opt}
t^*_i(\theta, b) := \theta_i \: \xi_i^{*}(\theta, b) - \: \int\limits^{\theta_i}_{\theta_i^{min}} \xi_i^{*}(s,\theta_{-i}, b) \: ds.
\end{equation}
Then, under Assumptions 1-4, $(\xi^*, g^*, t^*)$ is a profit-maximizing  Bayesian incentive compatible and individually rational mechanism.
\end{theorem}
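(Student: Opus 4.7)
The plan is to leverage the decomposition in \eqref{eq:ExpRev}. By Lemmas \ref{thm:one} and \ref{lemma:ir}, any BIC and IR mechanism satisfies $K_i(b_i) \le 0$, so the expected profit is bounded above by the integral term in \eqref{eq:ExpRev} with $K_i(b_i)$ set to zero. Moreover, that integrand $\sum_i a_i w_i(\theta_i,b_i) - \sum_j p_j d_j$ depends on the mechanism only pointwise, subject to feasibility $(a,d)\in\mathcal{S}(b)$, so it is maximized by the pointwise rule in \eqref{qi_opt}. The remaining task is to show that the payment $t^*$ defined in \eqref{ti_opt}, coupled with $(\xi^*,g^*)$, yields a BIC and IR mechanism for which $K_i(b_i)=0$, thereby attaining the upper bound.

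To verify BIC for valuation misreports, I would appeal to Lemma \ref{thm:one}: it suffices to establish that $\Xi_i^*(\cdot,b_i)$ is non-decreasing and that $T_i^*$ has the integral form in \eqref{Thrm1}. Monotonicity of $\xi_i^*(r_i,\theta_{-i},b)$ in $r_i$ follows from a standard interchange argument, since only the coefficient $w_i(r_i,b_i)$---non-decreasing in $r_i$ by Assumption \ref{assum:MHRC}---depends on $r_i$ in the program \eqref{qi_opt}; monotonicity of the interim quantity $\Xi_i^*$ is then obtained by averaging over $(\theta_{-i},b_{-i})$. The formula \eqref{ti_opt} is precisely Myerson's ex-post expression, and taking its expectation over $(\theta_{-i},b_{-i})$ recovers \eqref{Thrm1} with the constant $K_i(b_i)=T_i^*(\theta_i^{\min},b_i)$.

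For BIC under flexibility-level misreports, I would invoke Lemma \ref{lem:BICSuffb}. Condition (ii) is the easier one: by Assumption \ref{assum:Negwmin}, $w_i(\theta_i^{\min},c_i)<0$ for every $c_i$, so replacing $a_i=1$ by $a_i=0$ in any feasible allocation strictly increases the objective of \eqref{qi_opt} (and weakly reduces the purchase cost). Hence $\xi_i^*(\theta_i^{\min},\theta_{-i},c_i,b_{-i})=0$, and \eqref{ti_opt} forces $t_i^*(\theta_i^{\min},\cdot)=0$, giving $T_i^*(\theta_i^{\min},c_i)=0$. Combined with Lemma \ref{thm:one}, this yields $K_i(b_i)=0$, which both secures IR via Lemma \ref{lemma:ir} and attains the profit upper bound. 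For condition (i), raising $c_i$ has two monotone effects on \eqref{qi_opt}: it increases the virtual valuation $w_i(\theta_i,c_i)$ by Assumption \ref{assum:MHRC}, and by the form of \eqref{Sset} it removes consumer $i$ from some capacity constraints, weakly enlarging the set of feasible allocations with $a_i=1$. Both effects push the optimum towards $a_i=1$.

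The main obstacle is making the monotonicity in condition (i) rigorous. The argmax in \eqref{qi_opt} need not be unique, so one must argue that \emph{some} optimal selection has $i$-th coordinate weakly increasing in $c_i$; this requires a careful swap argument on the feasible integer solutions, and the nested structure \eqref{Bsets} is essential, since it enables a coupling between optimizers for consecutive values of $c_i$. Once this monotone selection is in place, Lemmas \ref{thm:one}--\ref{lem:BICSuffb} combine to give BIC and IR, and by construction $(\xi^*,g^*,t^*)$ achieves the profit upper bound, completing the proof.
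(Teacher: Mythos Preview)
Your proposal is correct and follows exactly the standard Myerson-style argument that the paper has in mind; the paper itself gives no self-contained proof but simply defers to \cite{navabi2016optimal}, whose Theorem~1 is established precisely along the lines you sketch (upper bound via \eqref{eq:ExpRev}, pointwise optimization of the virtual-surplus integrand, then verifying BIC/IR via the monotonicity lemmas and Assumption~\ref{assum:Negwmin} to force $K_i(b_i)=0$). Your identification of the monotone-selection issue for condition~(i) of Lemma~\ref{lem:BICSuffb} as the one genuinely delicate step is also accurate, and the nested structure is indeed what makes the required interchange/coupling argument go through.
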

\begin{proof} 
The proof is similar to the proof for Theorem 1 in \cite{navabi2016optimal}. 
\end{proof}
The optimal allocation and purchase vectors pair $(\xi^*(\theta, b), g^*(\theta,b))$ given in \eqref{qi_opt} is the solution of an integer program and hence computationally hard to obtain. Moreover, each type profile $(\theta, b) \in \Theta \times \{1, 2, \cdots, k\}^N$ requires the solution of a different integer program. Similarly, the characterization of payments given by \eqref{ti_opt} is not very useful from a computational viewpoint as it requires the solution of a continuum of integer programs. 

Navabi et al. \cite{navabi2016optimal} considered the same mechanism with fixed supply ($g_j = 0, j = 1, \cdots, k$) and showed that by leveraging the nested structure imposed on the consumers' flexibility sets (see \eqref{Bsets}) the optimal allocation as characterized in Theorem \ref{thm:2} can be simplified. We summarize their main results in the following subsection. \vspace{-.5em}
\subsection{Optimal Allocation for the Case with Fixed Supply} \label{sec:resOld}
In the mechanism $(\xi, g, t)$, suppose $g_j(\theta, b) = 0, j = 1, 2, \cdots, k$ for all $(\theta, b) \in \Theta \times \{1, 2, \cdots, k\}^N$, that is, the seller's available supply is fixed and  no further purchases will be made. Then, based on the results in \cite{navabi2016optimal} under the optimal allocation, consumer $l$ in class $\mathcal{C}_i$ gets a desired good if its virtual valuation exceeds 0 and the thresholds $w_i^{thr}, w_{i+1}^{thr}, \cdots, w_k^{thr}$ where, the thresholds $\{w_i^{thr}\}_{i=1}^k$ are obtained using an iterative algorithm constructed in \cite{navabi2016optimal}. Consider the non-negative integer quantities $r_1^*, \cdots, r_k^*$ that are obtained as the solutions to $k$ one-dimensional integer programs that are solved recursively (see Lemma 6 in \cite{navabi2016optimal}). The above thresholds are then constructed in terms of $r_1^*, \cdots, r_k^*$ through an iterative algorithm that we briefly outline here (for a complete description of this procedure see \cite[section V-B]{navabi2016optimal}):


\begin{enumerate}
\item Firstly, any consumer $l$ with $w_l(\theta_l, b_l) \leq 0$ is immediately removed from consideration and is not allocated any good.  For each class of consumers, define the subset of consumers who have positive virtual valuations: $\mathcal{C}_i^+ := \{l \in \mathcal{C}_i : w_l(\theta_l, b_l) >0  \}.$ 
Define $r^*_1,\ldots,r^*_k$ as in Lemma 6 of \cite{navabi2016optimal} accordingly.  
\item  Let $\mathcal{L}_1 := \mathcal{C}_1^+$. From $\mathcal{L}_1$, $r^*_1$ consumers with the lowest virtual valuations are removed from consideration\footnote{Ties are resolved randomly. For continuous valuations, ties happen with zero probability and therefore the allocation rule for ties does not affect expected profit.}. The set of remaining consumers in $\mathcal{L}_1$ is denoted by $\mathcal{N}_1$.
\item We now proceed iteratively:  For $2 \leq i \leq k$, given the set $\mathcal{N}_{i-1}$, define  $\mathcal{L}_i := \mathcal{N}_{i-1} \bigcup \mathcal{C}^+_i$. Remove $r^*_i$ consumers with lowest virtual valuations from $\mathcal{L}_i$. The set of remaining consumers in $\mathcal{L}_i$ is now defined as $\mathcal{N}_i$.
\item After the $k^{th}$ iteration, all consumers in $\mathcal{N}_k$ are allocated a good from their respective flexibility sets.
\end{enumerate}
The thresholds $\{w_i^{thr}\}_{i=1}^k$ are then defined as
\begin{equation} \label{ThetaThr}
w_i^{thr} := (r_i^*)^{\text{th}} \; \text{lowest virtual valuation in} \; \mathcal{L}_i , i =  1,2, \cdots, k.
\end{equation}
(If $r^*_i=0$, $w_i^{thr}=0$.)\\
Hence, under the optimal allocation,  consumer $l$ in class $\mathcal{C}_i$ gets a desired good if its virtual valuation exceeds $0$ and the thresholds $w_i^{thr}, w_{i+1}^{thr}, \cdots, w_k^{thr}$. 

\subsection{Optimal Purchase Decisions } \label{sec:PurchaseRule}
For each type profile $(\theta, b) \in \Theta \times \{1, 2, \cdots, k\}^N$, let $\mathcal{A}_i$ denote the set of consumers in class $\mathcal{C}_i$ who are served under the supply profile $(m_1, m_2, \cdots, m_k)$ through the allocation procedure outlined in Section \ref{sec:resOld}, that is, for $i = 1, 2, \cdots, k$ \vspace{-1em}
\begin{equation}\label{Ai}
\begin{split}
\mathcal{A}_i &\coloneqq \Big\{ l \in \mathcal{C}_i : w_l(\theta_l, i)  > \max\{ 0, w_i^{thr}, w_{i+1}^{thr}, \cdots, w_k^{thr} \} \Big\} \; , 
\end{split}
\end{equation} 
where, the thresholds $\{w_i^{thr}\}_{i=1}^k$ are given in \eqref{ThetaThr}. Define $\mathcal{A} = \bigcup\limits_{i=1}^k \mathcal{A}_i$. Considering the results pointed in Section \ref{sec:resOld}, $\mathcal{A}$ is thus the set of \textit{all} consumers who were served under the supply profile $(m_1, m_2, \cdots, m_k)$. 

The optimal purchase decisions are now characterized in the following lemma.
\begin{lemma}\label{lemma:g*}
For each type profile $(\theta, b) \in \Theta \times \{1, 2, \cdots, k\}^N$ define
\begin{equation}\label{E_i}
\mathcal{E}_i \coloneqq \{ j \in \mathcal{C}_i \setminus \mathcal{A}_i : w_j(\theta_j, i) > p_i  \} \; \; , \; \; i = 1, 2, \cdots, k. 
\end{equation}
Then, the optimal purchase decisions are
\begin{equation} \label{g*}
g^*_i(\theta, b) = |\mathcal{E}_i| \; , \; i = 1, 2, \cdots, k.
\end{equation}
\begin{proof}
See Appendix \ref{sec:g*_prf}.
\end{proof}
\end{lemma}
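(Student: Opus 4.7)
The plan is to decompose the integer program in Theorem \ref{thm:2} into two stages. For any $\mathbf{g}\in\mathbb{Z}_+^k$, define
\[V(\mathbf{g}) := \max_{\mathbf{a}:(\mathbf{a},\mathbf{g})\in\mathcal{S}(b)}\sum_{l=1}^N a_l\, w_l(\theta_l, b_l);\]
this inner problem is precisely the fixed-supply problem of Section \ref{sec:resOld} with supply profile $\mathbf{m}+\mathbf{g}$, so its optimal value and served sets are described by the same threshold-based iterative procedure applied to the augmented supplies. The outer problem then becomes $\max_{\mathbf{g}}\{V(\mathbf{g})-\sum_j p_j g_j\}$, and the goal is to show its maximizer satisfies $g^*_i=|\mathcal{E}_i|$.

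The engine of the proof is a marginal analysis combined with an exchange argument exploiting $p_1>p_2>\cdots>p_k$. A direct inspection of \eqref{feas} shows that incrementing $g_i$ by one relaxes only the constraints indexed by $i' \geq i$, each by one unit, enabling exactly one extra served consumer with $b_l\geq i$ at marginal cost $p_i$. If in a candidate optimal solution a unit of $g_i$ is effectively supporting a served consumer of class $\mathcal{C}_j$ with $j>i$, one can instead reallocate that unit as $g_j$: since $\mathcal{B}_i\subset\mathcal{B}_j$, the reassignment preserves \eqref{feas} both at levels $\geq j$ (cumulative supply unchanged) and at intermediate levels $i\leq i'<j$ (the unit was not needed by any $\mathcal{C}_{i},\ldots,\mathcal{C}_{j-1}$ consumer), while strictly cutting cost by $p_i-p_j>0$. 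Iterating the swap, we may assume without loss of optimality that every additional purchase at level $i$ serves only an additional consumer of class $\mathcal{C}_i$ beyond $\mathcal{A}_i$. The per-class decision then decouples: serve $l\in\mathcal{C}_i\setminus\mathcal{A}_i$ iff the marginal net gain $w_l(\theta_l,i)-p_i$ is positive, i.e., iff $l \in \mathcal{E}_i$. Hence $g^*_i=|\mathcal{E}_i|$.

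The final step is to verify feasibility: with $g^*_i=|\mathcal{E}_i|$ and served set $\mathcal{A}_i\cup\mathcal{E}_i$ in each class, the constraint \eqref{feas} reduces to $\sum_{j\leq i}|\mathcal{A}_j|\leq\sum_{j\leq i}m_j$, which holds by feasibility of the fixed-supply allocation of Section \ref{sec:resOld}. The main obstacle will be making the exchange argument fully rigorous: because physical goods are fungible in the pool once purchased, one must formalize the phrase ``a unit of $g_i$ supports a $\mathcal{C}_j$ consumer'' (e.g., via a flow/assignment reformulation of the feasibility polytope $\mathcal{S}(b)$) and then carefully confirm that the swap never violates any of the intermediate cumulative supply constraints in \eqref{feas} that lie between levels $i$ and $j$.
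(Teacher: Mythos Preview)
Your plan---two-stage decomposition, an exchange argument exploiting $p_1>\cdots>p_k$, then a per-class marginal comparison of $w_l$ with $p_i$---is essentially the paper's own strategy; the appendix simply organizes the same two moves as a downward induction on the level, with explicit case distinctions (labelled $(1.\ast)$ through $(5.\ast)$) in place of your single global swap.

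The obstacle you flag at the end is, however, a genuine gap rather than mere bookkeeping. Your swap handles only the direction $j>i$ (a level-$i$ purchase effectively serving a higher class), and from it you jump to the claim that each level-$i$ purchase serves an \emph{additional} consumer in $\mathcal{C}_i\setminus\mathcal{A}_i$. But incrementing $g_i$ can instead benefit a \emph{lower} class: the purchased good may be assigned to some $l\in\mathcal{A}_i$ who, in the baseline allocation, was occupying an original good in $\mathcal{B}_{i'}\setminus\mathcal{B}_{i'-1}$ with $i'<i$; that original good is then freed for a previously unserved $\mathcal{C}_{i'}$ consumer. The net beneficiary lies in class $i'<i$, and your swap does not apply (replacing $g_i$ by $g_{i'}$ would \emph{raise} cost since $p_{i'}>p_i$), so after all profitable swaps such cascades can persist and the per-class decoupling does not follow. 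Concretely, take $k=2$, $\mathbf{m}=(1,1)$, one $\mathcal{C}_1$ consumer with virtual value $5$, two $\mathcal{C}_2$ consumers with virtual values $10$ and $8$, and $p_1=4>p_2=1$: the fixed-supply optimum serves only the two $\mathcal{C}_2$ consumers, so $\mathcal{A}_1=\emptyset$, $\mathcal{E}_1$ is the singleton, $\mathcal{E}_2=\emptyset$, and the formula yields $g^*=(1,0)$ with objective $23-4=19$; yet $g=(0,1)$ also serves all three and gives $23-1=22$. The paper's case analysis asserts that its case $(2.1)$ (respectively $(5.1)$) forces a purchased good to be ``wasted,'' but in this scenario the good is not wasted---it cascades downward---so neither your argument nor the paper's, as written, excludes it.
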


\subsection{Optimal Allocation } \label{sec:AllocRule}
With the optimal purchase decisions $g^* \coloneqq (g_1^*, g_2^*, \cdots, g_k^*)$ characterized in \eqref{g*}, the increased supply profile is $\bold{m}^{g^*} \coloneqq (m_1+g_1^*, m_2+g_2^*, \cdots, m_k+g_k^*)$. The problem is now similar to one with the fixed supply $\bold{m}^{g^*}$ as formulated and solved in \cite{navabi2016optimal}; hence, given the new supply profile $\bold{m}^{g^*}$ and using the results in Section \ref{sec:resOld} and Lemma \ref{lemma:g*}, it can be shown that under the optimal allocation, consumer $l$ in class $\mathcal{C}_i$ gets a desired good if its virtual valuation exceeds 0 and the thresholds $w_i^{thr}, w_{i+1}^{thr}, \cdots, w_k^{thr}$, \textit{or} the unit price $p_i$. Let us define
\begin{equation}\label{thet_thr_xi}
\begin{split}
\theta_{l,i}^{g^*} &:= \Big\{ x : w_l(x, i) = \min\big\{p_i , \max\{0, \{w_j^{thr}\}_{j=i}^k\} \big\} \Big\}, \\
& l \in \mathcal{C}_i, \; i = 1, 2, \cdots, k.
\raisetag{1\baselineskip}
\end{split}
\end{equation}
Because of the monotonicity of virtual valuation as a function of true valuation, consumer $l$ in class $\mathcal{C}_i$ gets a good if $\theta_l > \theta_{l,i}^{g^*}$. Thus,  \vspace{-.8em}
\begin{equation} \label{Alloc}
\begin{split}
\xi_l^{*}(\theta, b) = \left\{
    \begin{array}{ll}
        1 & \text{if} \minitab \theta_l > \theta_{l,i}^{g^*}  \\
        0 & \text{otherwise}  
    \end{array}
\right., \; \forall l \in \mathcal{C}_i, \; \;  i = 1, 2, \cdots, k .
\end{split}
\end{equation}
\vspace{-2em}
\subsection{Payment Functions} \label{sec:TaxFuncs}
We can now use the optimal allocation rule described in section \ref{sec:AllocRule} to simplify consumers' payment functions. 
From \eqref{ti_opt} the optimal payment function for consumer $l$ in flexibility class $\mathcal{C}_i$ has the following form:\vspace{-.8em} 
\begin{align} \label{eq:Tax_i}
t^*_l(\theta, b) = \theta_l \xi_l^{*}(\theta, b) - \int\limits^{\theta_l}_{\theta_l^{min}} \xi_l^{*}(s, \theta_{-l}, b) \: ds. 
\end{align}
Using the definition of  $\xi_l^{*}(\theta, b)$ given in \eqref{Alloc}, $t^*_l(\theta, b)$ can be simplified as:
\begin{enumerate}
\item If $\theta_l > \theta_{l,i}^{g^*}$, \vspace{-.8em}
\begin{equation} \label{taxFunc}
\begin{split}
\hspace{-2.6em}t^*_l(\theta, b) &= \theta_l  - \int\limits_{\theta_l^{min}}^{\theta_{l,i}^{g^*} } \underbrace{\xi_l^{*}(s, \theta_{-l}, b)}_{=0} \; ds - \int\limits_{\theta_{l,i}^{g^*} }^{\theta_l} \underbrace{\xi_l^{*}(s, \theta_{-l}, b)}_{=1} \; ds = \theta_{l,i}^{g^*}. 
\end{split}
\end{equation}
\vspace{-.8em}
\item If $\theta_l \leq  \theta_{l,i}^{g^*}$, $t^*_l(\theta,b) =0$.
\end{enumerate}

The optimal allocation and purchase decisions as well as the optimal payments can thus be  computed through the straightforward threshold-based procedure outlined in section \ref{sec:resOld}. By using the nested structure of  flexibility sets, this procedure obviates the need to solve the computationally hard integer program formulated in Theorem \ref{thm:2}.

\vspace{-.8em}
\section{Conclusion}\label{sec:recap}
We studied the problem of designing profit-maximizing mechanisms for allocating multiple goods to flexible consumers. We characterized the allocation and purchase rules for an incentive compatible, individually rational and profit-maximizing mechanism as the solution to an integer program. The  payment function was determined by the optimal allocation rule in the form of an integral equation. We then exploited the nested structure imposed on the flexibility sets to simplify the optimal mechanism and provided a complete characterization of allocations, purchase decisions and payments in terms of simple thresholds.

An interesting direction for further exploration is to study this mechanism problem in a dynamic framework where the population of the consumers that interact with the market as well as seller's supply undergo stochastic changes over time. Richer information structures may be needed to address those cases.
\vspace{-1em}
\section{Proof of Lemma 4.2}\label{sec:g*_prf} 
Consider the supply vector $(m_1+h_1, \cdots, m_k+h_k), h_1, \cdots, h_k \ge 0$. We want to show that $(g_1^*, \cdots, g_k^*)$ is the optimal purchase decision vector where, $g_i^* = |\mathcal{E}_i|, i = 1, 2, \cdots, k$ (see \eqref{E_i}). Let $\mathcal{A}_i(h_1, h_2, \cdots, h_k), i = 1, \cdots, k$ denote the set of consumers in class $\mathcal{C}_i$ who are served under the supply profile $(m_1+h_1, \cdots, m_k+h_k), h_1, \cdots, h_k \ge 0$. Recall that $\mathcal{A}_i, i = 1, 2, \cdots, k$ denotes the set of consumers in class $\mathcal{C}_i$ that are served under the supply profile $(m_1, m_2, \cdots, m_k)$ (see \eqref{Ai}) using the allocation procedure outlined in section \ref{sec:resOld}. 

We start from flexibility level $k$ and we consider the following two cases:
\newcounter{l1}
\newcommand{\barablistOne}{\begin{list}{($1.$\arabic{l1})}{\usecounter{l1}}}
\barablistOne 
\item $\mathcal{A}_k(h_1, h_2, \cdots, h_{k-1}, 0) \supset \mathcal{A}_k$, \label{item:Aksup}
\item $\mathcal{A}_k(h_1, h_2, \cdots, h_{k-1}, 0) = \mathcal{A}_k$.  \label{item:Akeq}
\end{list}
Case (1.\ref{item:Aksup}) implies that there exists some consumer in class $\mathcal{C}_k$ who is served with an additionally purchased good for some set $\mathcal{B}_i \subset \mathcal{B}_k$ at the price of $p_i$. In this case $(h_1, h_2, \cdots, h_{k-1}, h_k)$ cannot be optimal for \textit{any} $h_k \ge 0$; essentially because one can reduce $h_i$ by 1 and increase $h_k$ by 1 to improve the objective function value in \eqref{qi_opt} by the amount $(p_i - p_k)$. Hence, we consider purchase decision vectors $(h_1, h_2, \cdots, h_k) \; , \; h_1, \cdots, h_k \ge 0$ such that case (1.\ref{item:Akeq}) is true. 

Assumption \ref{assum:Nonz} implies that the $h_k$ purchased goods for $\mathcal{B}_k \setminus \mathcal{B}_{k-1}$ can only be used to serve the consumers in class $\mathcal{C}_k$. From case (1.\ref{item:Akeq}) it follows that $|\mathcal{A}_k(h_1, \cdots, h_{k-1}, h_k)|$ cannot exceed $|\mathcal{A}_k| + h_k$. Now consider the following two cases

\newcounter{l2}
\newcommand{\barablistTwo}{\begin{list}{($2.$\arabic{l2})}{\usecounter{l2}}}

\barablistTwo
\item $|\mathcal{A}_k(h_1, \cdots, h_{k-1}, h_k)| < |\mathcal{A}_k(h_1, h_2, \cdots, h_{k-1}, 0) | + h_k$, \label{item:hkle}
\item $|\mathcal{A}_k(h_1, \cdots, h_{k-1}, h_k)| = |\mathcal{A}_k(h_1, h_2, \cdots, h_{k-1}, 0) | + h_k$. \label{item:hkeq}
\end{list}

Case (2.\ref{item:hkle}) implies that at least one of the $h_k$ goods is not being allocated to any consumer in $\mathcal{C}_k$ and hence, is wasted. In this case also $(h_1, \cdots, h_{k-1}, h_k)$ cannot be optimal for any $h_k > 0$ since $h_k$ can be reduced by 1 to eliminate the incurred cost of the wasted good ($p_k$) and improve the value of the objective function; hence, we consider vectors $(h_1, \cdots, h_k)$ that satisfy case (2.\ref{item:hkeq}) where all the purchased $h_k$ goods are \textit{allocated} to the consumers in $\mathcal{C}_k$ and thus, are not wasted. 

Suppose now we start from $h_k = 0$ and increase it; as a result, the positive terms $(w_j - p_k), j \in \mathcal{E}_k$ (see \eqref{E_i}) are added in the objective function (see \eqref{qi_opt}) that increase its value. We continue to increase $h_k$ until $h_k = | \mathcal{E}_k | = g_k^*$; if we increase $h_k$ further such that $h_k > g_k^*$, it follows from case (2.\ref{item:hkeq}) --- all the $h_k$ goods must be allocated --- that the objective function value begins to decline; essentially because the terms $(w_j - p_k), j \in \mathcal{C}_k \setminus (\mathcal{E}_k \cup \mathcal{A}_k)$ that are added then, are not positive since $w_j < p_k \; , j \in \mathcal{C}_k \setminus (\mathcal{E}_k \cup \mathcal{A}_k)$; hence, $h_k = | \mathcal{E}_k | = g_k^*$ is the optimal number of goods to purchase at the price of $p_k$ and add to $\mathcal{B}_k \setminus \mathcal{B}_{k-1}$. 

The above arguments imply that in order for the purchase decision vector $(h_1, \cdots, h_k)$ to be optimal, it needs to satisfy the following:

\begin{enumerate}
\item Case (1.\ref{item:Akeq}) holds true, 
\item Case (2.\ref{item:hkeq}) holds true,
\item $h_k = | \mathcal{E}_k | = g_k^*$.
\end{enumerate}
Hence we restrict our attention to the decision vectors $(h_1, \cdots, h_k)$ that satisfy the above three conditions.

Inductive step: \\
Induction hypothesis: suppose now that the optimal purchase vector has the form $(h_1, \cdots, h_i, g_{i+1}^*, \cdots, g_k^*)$ where the $(i+1)$th up until $k$th entries are fixed as: $g_j^*= |\mathcal{E}_j|, j = i+1, \cdots, k$. 

We now want to show that in order for the decision vector $(h_1, \cdots, h_{i-1}, h_i, g^*_{i+1}, \cdots, g_k^*)$ to be optimal, it must be that $h_i = |\mathcal{E}_i| = g_i^*$. 

Consider the following cases 
\newcounter{l3}
\newcommand{\barablistThree}{\begin{list}{($3.$\arabic{l3})}{\usecounter{l3}}}
\barablistThree
\item For some $j \in \{ i+1, \cdots, k \}:$ \\
$\;\mathcal{A}_j(h_1, \cdots, h_{i-1}, h_i, g_{i+1}^*, \cdots, g_k^*) \supset \mathcal{A}_j(h_1, \cdots, h_{i-1}, 0, g_{i+1}^*, \cdots, g_k^*) $, \label{item:Ajsup}
\item $\mathcal{A}_j(h_1, \cdots, h_{i-1}, h_i, g_{i+1}^*, \cdots, g_k^*) = \mathcal{A}_j(h_1, \cdots, h_{i-1}, 0, g_{i+1}^*, \cdots, g_k^*)$, for all $j = i+1, \cdots, k$. \label{item:Ajeq}
\end{list}

Case (3.\ref{item:Ajsup}) implies that there is some consumer in some class $C_j, i+1 \le j \le k$ that is served with an additionally purchased good from the set $\mathcal{B}_i \setminus \mathcal{B}_{i-1}$ at the price of $p_i$;  clearly, in this case $(h_1, \cdots, h_{i-1}, h_i, g_{i+1}^*, \cdots, g_k^*)$ cannot be optimal for \textit{any} $h_i > 0$; simply because $h_i$ can be reduced by 1 while $h_j = g_j^*$ is increased to $h_j = g_j^*+1$ to enhance the objective function value in \eqref{qi_opt} by the amount $(p_i - p_j)$. Therefore, we consider decision vectors $(h_1, \cdots, h_{i-1}, h_i, g_{i+1}^*, \cdots, g_k^*)$ such that case (3.\ref{item:Ajeq}) holds true, that is, none of the consumers in classes $\mathcal{C}_j, j = i+1, \cdots, k$ are served with the purchased $h_i$ goods from the set $\mathcal{B}_i \setminus \mathcal{B}_{i-1}$.  

Now consider the following two cases:
\newcounter{l4}
\newcommand{\barablistFour}{\begin{list}{($4.$\arabic{l4})}{\usecounter{l4}}}
\barablistFour
\item $\mathcal{A}_i(h_1, \cdots, h_{i-1}, 0, g_{i+1}^*, \cdots, g_k^*) \supset \mathcal{A}_i$, \label{item:Aisup}
\item $\mathcal{A}_i(h_1, \cdots, h_{i-1}, 0, g_{i+1}^*, \cdots, g_k^*) = \mathcal{A}_i.$ \label{item:Aieq}
\end{list}

Case (4.\ref{item:Aisup}) implies that some consumer in class $\mathcal{C}_i$ is served with an additionally purchased good from some set $\mathcal{B}_l \setminus \mathcal{B}_{l-1}, l < i$; in this case also $(h_1, \cdots, h_{i-1}, h_i, g^*_{i+1}, \cdots, g_k^*)$ cannot be optimal for any $h_i \ge 0$; essentially because one can reduce $h_l$ by 1 and increase $h_i$ by 1 to improve the objective function value by $(p_l - p_i)$. Case (4.\ref{item:Aieq}) however indicates that none of the consumers in $\mathcal{C}_i$ are served with additionally purchased goods from the sets $\mathcal{B}_l\setminus \mathcal{B}_{l-1}, l < i$. We thus consider the decision vectors $(h_1, \cdots, h_{i-1}, h_i, g_{i+1}^*, \cdots, g_k^*)$ such that cases (3.\ref{item:Ajeq}) and (4.\ref{item:Aieq}) both hold true. 

Assumption \ref{assum:Nonz} and case (3.\ref{item:Ajeq}) imply that the additionally purchased $h_i$ goods can only be used to serve the consumers in $\mathcal{C}_i$. Also, from case (4.\ref{item:Aieq}) it follows that $|\mathcal{A}_i(h_1, \cdots, h_{i-1}, h_i, g_{i+1}^*, \cdots, g_k^*)|$ cannot exceed $|\mathcal{A}_i| + h_i$. Now consider the following two cases
\newcounter{l5}
\newcommand{\barablistFive}{\begin{list}{($5.$\arabic{l5})}{\usecounter{l5}}}
\barablistFive
\item $|\mathcal{A}_i(h_1, \cdots, h_{i-1}, h_i, g_{i+1}^*, \cdots, g_k^*)| < |\mathcal{A}_i(h_1, \cdots, h_{i-1}, 0, g_{i+1}^*, \cdots, g_k^*)| + h_i$, \label{item:hile}
\item $|\mathcal{A}_i(h_1, \cdots, h_{i-1}, h_i, g_{i+1}^*, \cdots, g_k^*)| = |\mathcal{A}_i(h_1, \cdots, h_{i-1}, 0, g_{i+1}^*, \cdots, g_k^*)| + h_i$. \label{item:hieq}
\end{list}

Case (5.\ref{item:hile}) implies that there is at least one of the $h_i$ goods that is not allocated to any consumer in $\mathcal{C}_i$ and hence, is wasted; it thus follows that in this case $(h_1, \cdots, h_{i-1}, h_i, g_{i+1}^*, \cdots, g_k^*)$ cannot be optimal for any $h_i > 0$ simply because one can reduce $h_i$ by 1 to eliminate the incurred cost of the wasted good and increase the value of the objective function by $p_i$. On the other hand, case (5.\ref{item:hieq}) implies that all the $h_i$ purchased goods are allocated to the consumers in $\mathcal{C}_i$; thus, we consider decision vectors $(h_1, \cdots, h_{i-1}, h_i, g_{i+1}^*, \cdots, g_k^*)$ that satisfy the cases (3.\ref{item:Ajeq}), (4.\ref{item:Aieq}) and (5.\ref{item:hieq}). Given this, suppose we start from $h_i = 0$ and increase its value; as a result, the positive terms $(w_j - p_i), j \in \mathcal{E}_i$ are added in the objective function and its values increases. We continue to increase $h_i$ until $h_i = | \mathcal{E}_i | = g_i^*$; if we increase $h_i$ further such that $h_i > g_i^*$, it follows from case (5.\ref{item:hieq}) that the terms $(w_j - p_i), j \in \mathcal{C}_i \setminus (\mathcal{E}_i \cup \mathcal{A}_i)$ that are added then, are not positive because $w_j < p_i \; , j \in \mathcal{C}_i \setminus (\mathcal{E}_i \cup \mathcal{A}_i)$; hence, $h_i = | \mathcal{E}_i | = g_i^*$ is the optimal number of goods to purchase at the price of $p_i$ and add to $\mathcal{B}_i \setminus \mathcal{B}_{i-1}$. 

From the above arguments it is deduced that in order for the decision vector $(h_1, \cdots, h_{i-1}, h_i, g_{i+1}^*, \cdots, g_k^*)$ to be optimal, it needs to satisfy the following:

\begin{enumerate}
\item Case (3.\ref{item:Ajeq}) holds true, 
\item Case (4.\ref{item:Aieq}) holds true, 
\item Case (5.\ref{item:hieq}) holds true,
\item $h_i = | \mathcal{E}_i | = g_i^*$.
\end{enumerate}

Therefore, we conclude inductively that the optimal purchase decision vector is $(g_1^*, \cdots, g_k^*)$, where $g_i^* = |\mathcal{E}_i|, i = 1, 2, \cdots, k$ gives the optimal number of goods to purchase at the price of $p_i$ for the set $\mathcal{B}_i \setminus \mathcal{B}_{i-1}$. This completes the proof.

%

\bibliographystyle{IEEEtran}
\bibliography{REF}

\end{document}